\documentclass[runningheads]{llncs}
\usepackage[T1]{fontenc}
\usepackage{graphicx}
 
\usepackage{amsmath} 
\usepackage{amssymb} 
\usepackage{stmaryrd} 

\usepackage{varwidth}
\usepackage{listings} 
	\newcommand{\snip}[1]{\mbox{\texttt{#1}}} 

\usepackage{multirow}

\newcommand{\var}[1]{\mathit{#1}} 
\newcommand{\func}[1]{\mathit{#1}} 
\newcommand{\sdef}{\:\:\widehat{=}\:\:} 
\newcommand{\powset}{\mathcal{P}} 

\newcommand{\T}{\mathcal{T}} 
\newcommand{\V}{\mathcal{V}} 
\newcommand{\s}{\mathcal{S}} 

\newcommand{\D}{\mathcal{D}} 
\newcommand{\I}{\mathcal{I}} 

\newcommand{\VP}{\V}

\newcommand{\instr}{\var{inst}}
\newcommand{\cond}{\var{cond}}

\newcommand{\dsqsubseteq}{\sqsubseteq^\sharp}
\newcommand{\dsqsupseteq}{\sqsupseteq^\sharp}
\newcommand{\dgamma}{\gamma^\sharp}
\newcommand{\dbot}{\bot^\sharp}
\newcommand{\dtop}{\top^\sharp}
\newcommand{\djoin}{\sqcup^\sharp}
\newcommand{\dmeet}{\sqcap^\sharp}
\newcommand{\bigdjoin}{\sideset{}{^\sharp}\bigsqcup}
\newcommand{\bigdmeet}{\sideset{}{^\sharp}\bigsqcap}
\newcommand{\bd}{\mathbb{B}^\sharp}
\newcommand{\sd}{\mathbb{S}^\sharp}

\newcommand{\sconc}{\func{exec}}

\newcommand{\isqsupseteq}{\sqsupseteq^\times}
\newcommand{\isqsubseteq}{\sqsubseteq^\times}
\newcommand{\igamma}{\gamma^\times}
\newcommand{\ibot}{\bot^\times}
\newcommand{\itop}{\top^\times}
\newcommand{\ijoin}{\sqcup^\times}
\newcommand{\imeet}{\sqcap^\times}
\newcommand{\bigijoin}{\sideset{}{^\times}\bigsqcup}

\newcommand{\transitions}{\func{transitions}}
\newcommand{\stabilise}{\func{stabilise}}
\newcommand{\close}{\func{close}}

\newcommand{\rely}{\func{rely}}

\newcommand{\assign}{\var{Assign}}

\newcommand{\havocSymb}{\lightning}
\newcommand{\havoc}[2]{#1\havocSymb#2}

\newcommand{\lstingSpace}{\hspace{0.08\textwidth}}

\begin{document}

\title{Generating Rely-Guarantee Conditions with the Conditional-Writes Domain}

\author{James Tobler\orcidID{0000-0002-1205-3455} \and
Graeme Smith\orcidID{0000-0003-1019-4761}}

\institute{Defence Science and Technology Group, Australia \\
School of Electrical Engineering and Computer Science,\\
The University of Queensland, Brisbane, Australia\\
\email{james.tobler@uq.edu.au}}

\maketitle

\begin{abstract}
Abstract interpretation has been shown to be a promising technique for the thread-modular verification of concurrent programs. Central to this is the generation of interferences, in the form of rely-guarantee conditions, conforming to a user-chosen structure. In this work, we introduce one such structure called the \emph{conditional-writes} domain, designed for programs where it suffices to establish only the \emph{conditions} under which particular variables are written to by each thread. We formalise our analysis within a novel abstract interpretation framework that is highly modular and can be easily extended to capture other structures for rely-guarantee conditions. We formalise two versions of our approach and evaluate their implementations on a simple programming language.

\keywords{Abstract interpretation \and Concurrency \and Rely-guarantee.}
\end{abstract}

\section{Introduction}

Abstract interpretation has been shown to be a scalable technique for automatically verifying sequential programs~\cite{cousot77,mine17}. Its distinctive advantage lies in the provision of a simple parameter called an \emph{abstract domain}, which constrains the structure of generated intermediary assertions over the program state. This allows proofs to be generated at a large scale, given a suitable abstract domain is provided for the target program. Unfortunately, these are not always easy for users to infer.

For concurrent programs, \emph{interferences} between threads, captured by rely-guarantee conditions, must also be generated. Our work is based on the conjecture that, contrary to intermediary assertions (required for sequential programs), many of the interferences that suffice for verification (of concurrent programs) tend to take a similar structure across various concurrent programs. In this paper, we capture one of these structures in an abstract domain called the \emph{conditional-writes} domain, which is designed for cases where it suffices to reason about thread interference in terms of only the \emph{conditions} under which variables may be updated, rather than what they are updated to. For example, where it suffices to establish: ``Thread $A$ can write to $x$ when $z=0$ and can write to $y$ when $z=1$ but never writes to $z$.''

To precisely capture this structure in an abstract domain, we introduce a novel framework for applying abstract interpretation to concurrent programs. In current frameworks such as~\cite{monat17}, thread interferences are abstracted into numeric abstract domains over the program variables and their primed counterparts, where a primed variable represents its value in the post-state of the interference. This approach is limited in that primed and unprimed variables are treated equally --- that is, whether a variable is primed has no impact on where it may be placed within the structure of an abstract domain element. Consequently, they cannot represent structures that place different constraints on pre-states and post-states, such as our conditional-writes domain. This has motivated us to introduce a new class of abstract domain called an \emph{interference domain}, of which the conditional-writes domain is one example.

In the abstract interpretation framework, a traditional abstract domain combines a lattice $\D$ over sets of states together with a \emph{transfer function} that derives an element of $\D$ as a conservative pre- or post-condition of a program instruction. Conversely, an interference domain combines a lattice $\I$ over sets of \emph{pairs of states} (representing rely and guarantee conditions) with two functions, both parameterised by an abstract domain with lattice $\D$. The first, which we call $\transitions$, maps some $d\in\D$ and an assignment to an element of $\I$ that over-approximates the set of state transitions that may be induced by executing the given assignment under any state captured by $d$. The second, which we call $\stabilise$, maps some $d\in\D$ and $i\in\I$ to an element of $\D$ that over-approximates the set of states reachable within one step of $i$ from any initial state in $d$. Together, these two functions can be used to respectively derive and apply thread interferences (guarantee and rely conditions, respectively) during a thread-modular analysis.

In addition to these two functions, we equip our conditional-writes domain with a function $\close$ that over-approximates the transitive closure of thread interferences. Using this function to make some $i\in\I$ transitive allows us to account for the effect of one or more transitions in $i$ with just one call to $\stabilise$, thereby simplifying the process of applying interference at the cost of deriving this transitive closure. We determine whether this trade-off is worthwhile by implementing and evaluating two versions of our analysis: one where transitivity is enforced, and one where it is not.

We begin in Section~\ref{sec:lang} by introducing a simple programming language on which to evaluate our analyses. In Section~\ref{sec:abs}, we define an abstract state domain which is used as a parameter to our analyses, and in Section~\ref{sec:cond}, we define our interference domain including the functions $\transitions$, $\stabilise$, and $\close$. The analyses with and without transitivity are formalised in Section~\ref{sec:analysis} and evaluated on the simple programming language in Section~\ref{sec:eval}. We discuss related work in Section~\ref{sec:rel} before concluding in Section~\ref{sec:con}. 

\section{Programming Language}
\label{sec:lang}

We define $\VP$ to be the set of program variables, $\T$ the identifiers of the threads comprising the program, and $\s$ the set of all states, where a state maps each variable in $\VP$ to a constant. Each thread $t\in\T$ represents a sequential program in the language $\instr$:
\begin{align*}
	\instr::=\ 
	&\instr\:\snip{;}\:\instr\:\mid\: x_1,...,x_n\snip{ := }e_1,...,e_n\:\mid\:\snip{if }\cond\snip{ then }\instr\snip{ else }\instr\\
	&\mid\:\snip{while }\cond\snip{ do }\instr\:\mid\: \snip{skip}
\end{align*}
where $\cond$ is the set of all boolean expressions, and for each $i\in\{1,...,n\}$, we have $x_i\in\VP$ and $e_i$ is an expression. The concrete semantics of $\instr$ is denoted:
\[
	\sconc:\instr\times\s\to\s
\]
where $\sconc(c,s)$ is the program state reached by executing instruction $c$ in the state $s$. In our simple language, all assignment statements are atomic, as are the evaluations of all branch conditions and loop guards.

For each thread $t\in\T$, we also let $\V_t\subseteq\V$ represent the set of variables over which the rely condition of $t$ should be defined. Using $\V_t$, we can avoid complicating the intermediary assertions that are generated for $t$ when accounting for interference specified by the rely. Typically, $\V_t$ is set to the union of the global variables and the local variables of $t$. However, in some cases, precision can be improved by including the local variables of other threads in $\V_t$\footnote{In standard rely-guarantee approaches, this is akin to converting a local variable into an ``auxiliary variable'' \cite{sto91,xu97}.}, and conversely, efficiency can be improved by restricting $\V_t$ to only those variables that are relevant to our particular verification goals. Most existing abstract interpretation frameworks for concurrent programs do not distinguish between local and global variables~\cite{monat17,mine14,mine12} and therefore effectively have $\V_t=\V$ for all $t\in\T$.

\section{Abstract State Domain}
\label{sec:abs}

Our analysis is parameterised by an abstract domain with lattice $\D$, representing sets of states. To distinguish this from our interference domain, we call it the \emph{state domain}, and call $\D$ the \emph{state lattice}. For $d_1,d_2\in\D$, we denote by $d_1\dsqsubseteq d_2$ that $d_1$ is less than or equal to $d_2$ on the lattice, meaning it represents the same or fewer states. (From now on, we may say that $d_1$ is ``stronger'' than $d_2$, or that $d_2$ is ``weaker'' than $d_1$.) The lattice's top element $\dtop$ represents the set of all states, and the bottom element $\dbot$ represents the empty set.
In addition to the above, and the usual join $\djoin$ (least upper bound) and meet $\dmeet$ (greatest lower bound) operators, our state domain is equipped with the following basic functions:
\begin{description}
	\item $\dgamma:\D\to\powset(\s)$ where $\dgamma(d)$ is the set of states represented by $d$.
	\item $\sd:\instr\times\D\to\D$ where $\sd(c,d)$ is the abstract post-state of instruction $c$ with respect to the abstract pre-state $d$.
	\item $\bd:\cond\times\D\to\D$ where $\bd(b,d)$ is the abstract state capturing all states in $d$ that satisfy the condition $b$.
	\item $\havocSymb:\D\times\powset(\VP)\to\D$ where $\havoc{d}{V}$ is an abstract state weaker than $d$ which at least includes the set of states obtained from $d$ when the value of all variables in $V$ are unconstrained. Formally:	
	\[
		\dgamma(\havoc{d}{V})\supseteq\left\{
			s\in\s\mid
			\exists s_0\in\dgamma(d):
				\forall v\in\VP\setminus V:s(v)=s_0(v)
		\right\}.
	\]
	Similar to the above functions, the precise definition of $\havocSymb$ is provided by the specific abstract state domain given to our analysis. However, soundness of our analysis requires that this definition satisfies the above inequality, in addition to the basic properties:
	\begin{enumerate}
		\item Havocing any variable set in $\dbot$ results in $\dbot$:
		\[
			\forall V\in\powset(\VP):\havoc{\dbot}{V}=\dbot.
		\]
		\item Havocing additional variables that are not constrained has no effect:
		\[
			\forall V_1,V_2\in\powset(\V),d\in\D:\havoc{d}{V_1}=d\implies\havoc{d}{V_1\cup V_2}=\havoc{d}{V_2}.
		\]
		\item The $\havocSymb$ operator is monotonic in that:
		\[
			\forall V_1,V_2\in\powset(\V),d_1,d_2\in\D:V_1\subseteq V_2\land d_1\dsqsubseteq d_2\implies\havoc{d_1}{V_1}\dsqsubseteq\havoc{d_2}{V_2}.
		\]
	\end{enumerate}
	We give $\havocSymb$ the highest precedence of all the state lattice operators.
\end{description}

\section{Conditional-Writes Domain}
\label{sec:cond}

To capture interference between threads, we introduce the notion of an \emph{interference domain}. An interference domain is defined by a lattice $\I$ whose elements represent sets of pairs of states. It defines two functions over this lattice, both parameterised by a compatible state domain with lattice $\D$:
\begin{align*}
	\stabilise:\I\times\D\to\D
\quad&&\quad
	\transitions:\D\times\assign\to\I
\end{align*}
where $\assign$ are those elements of $\var{inst}$ corresponding to assignment statements. We refer to a state domain as \emph{compatible} if it defines on $\D$ all functions used in the definitions of $\stabilise$ and $\transitions$.

In this work, we define an interference domain called the \emph{conditional-writes domain}. It is parameterised by any state domain described by Section~\ref{sec:abs} and defines $\I$ to be a lattice over mappings from program variables to elements of the state lattice $\D$. Formally:
\[
	\I=\V\to\D.
\]
Intuitively, an element $i\in\I$ of this domain maps each variable to an over-approximation of the set of states under which that variable may be written to, which we call its \emph{write-condition}. For example, if $i$ represents a valid guarantee condition of thread $t$, and takes the form $i=[v_1\mapsto P_1,\:v_2\mapsto P_2]$, then $t$ only updates $v_1$ when the program is in a state captured by $P_1$, and only updates $v_2$ when the program is in a state captured by $P_2$. If the program is in a state captured by both $P_1$ and $P_2$, then $t$ may change either variable, or both variables in one step. This is captured by the concretisation function $\igamma:\I\to\powset(\s\times\s)$:
\[
	\igamma(i)\sdef\left\{
	\left(s_1,s_2\right)\in\s\times\s\mid
	\forall v\in\VP:s_2(v)\ne s_1(v)\implies s_1\in\dgamma(i(v))
	\right\}.
\]
That is, the pre-state of any transition in $\igamma(i)$ which changes variable $v$ must be a state in which $i$ allows $v$ to be updated.

Similar to our abstract state domain, for $i_1,i_2\in\I$, we denote by $i_1\isqsubseteq i_2$ that $i_1$ is less than or equal to $i_2$ on the lattice, meaning it represents the same or fewer transitions. And likewise, we may say that $i_1$ is ``stronger'' than $i_2$, or that $i_2$ is ``weaker'' than $i_1$. The lattice's top element $\itop$ represents the set of all state transitions and the bottom element $\ibot$ represents the identity relation:
\begin{align*}
	\itop\sdef\lambda v\in\VP:\dtop
	\quad&&\quad
	\ibot\sdef\lambda v\in\VP:\dbot.
\end{align*}
The join and meet operators, $\ijoin$ and $\imeet$, are defined component-wise:
\begin{align*}
	i_1\ijoin i_2\sdef\lambda v\in\VP: i_1(v)\djoin i_2(v)
	\quad&&\quad
	i_1\imeet i_2\sdef\lambda v\in\VP: i_1(v)\dmeet i_2(v).
\end{align*}

\subsection{Small Example}
\label{sec:example}

\noindent
In this section, we provide a small example to demonstrate the mechanics of our analysis. The following program is comprised of threads $T_0$ and $T_1$, global variables $x$ and $z$, and a variable $r$ that is local to $T_0$. We aim to prove that $r=0$ holds upon termination of the program, with respect to precondition $\var{true}$.

\vspace{-0.5\abovedisplayskip}{\centering\begin{varwidth}[t]{\textwidth}\begin{lstlisting}
T0:
1: r := 0
2: if z == 0:
3:     x := 0
4:     r := x
\end{lstlisting}\end{varwidth}\lstingSpace
\begin{varwidth}[t]{\textwidth}\begin{lstlisting}
T1:
1: if z == 1:
2:     x := 1
\end{lstlisting}\end{varwidth}\\}\vspace{\belowdisplayskip}

\noindent With a quick inspection of the program, we may infer that our postcondition is satisfied if $x=0$ holds before the assignment at line~4 in $T_0$. Looking at $T_1$ now, we see that $x$ may indeed be updated to~$1$, but crucially, only when $z=1$. This constraint on the conditions under which $x$ can be updated is the property that our analysis will infer in order to verify the program.

Our analysis is \emph{thread-modular}. For each thread $t$, we derive an assertion over the program state at each instruction, which we call its \emph{proof outline}. Our goal is for each of these assertions to over-approximate the states that are reachable at that location under any thread interleaving. Like many other thread-modular techniques~\cite{monat17,gupta11,ezudheen18,flanagan03,le20,carre09}, we accomplish this by generating a \emph{guarantee condition} for each thread in $t$'s environment, which is an over-approximation of its set of reachable state transitions. In our analysis, these guarantee conditions are represented as elements of the conditional-writes domain. By taking their join, we get a sound \emph{rely condition} for $t$, which over-approximates the set of state transitions that may be executed in $t$'s environment. To ensure that a given assertion in $t$'s proof outline holds in all thread interleavings, we weaken it so as to include the post-states of any number of transitions allowed by this rely. We now demonstrate this process on the above example.

To initialise our analysis, we must provide the following arguments:
\begin{enumerate}
	\item A state domain with lattice $\D$. We choose the \emph{constant domain}, which is a partial mapping from variables to their constant values.
	\item The variable sets $\V_0$ and $\V_1$ over which we would like the rely conditions of $T_0$ and $T_1$ to be defined respectively. We choose $\V_0=\V_1=\VP$, thus effectively treating all program variables as global.
	\item An integer $N$ specifying the precision of the $\stabilise$ function, where $0\le N\le|\V|$. In our case, $|\V|=3$ and we choose $N=3$ for maximum precision. We describe the precise meaning of this number in Section~\ref{sec:stabilise}.
\end{enumerate}
The analysis begins by deriving the reachable states for $T_0$ using our constant domain. We do not account for environment interference in this step. Elements of $\D$ are represented by mappings from variables to values; any variables not included are not constrained. For example, $[r\mapsto0]$ denotes the set of states where $r=0$ and $x$ and $z$ can have any value. 

\vspace{-0.5\abovedisplayskip}{\centering\begin{varwidth}[t]{\textwidth}\begin{lstlisting}
T0:
   $\dtop$
1: r := 0
   $[r\mapsto0]$
2: if z == 0:
       $[r\mapsto0,\:z\mapsto0]$
3:     x := 0
       $[r\mapsto0,\:z\mapsto0,\:x\mapsto0]$
4:     r := x
       $[r\mapsto0,\:z\mapsto0,\:x\mapsto0]$
\end{lstlisting}\end{varwidth}\\}\vspace{\belowdisplayskip}

\noindent We now derive a guarantee condition that captures the states under which each variable may be written to in $T_0$. By inspecting the assertions in our proof outline, we observe that the assignment to $r$ follows the assertion $\dtop$ and thus $r$ may be written to by $T_0$ while in any state. This fact is captured by the $\transitions$ function:
\[
	\transitions(\dtop,\:r:=0)=\ibot[r\mapsto\dtop]
\]
where $f[x\mapsto y]$ denotes the function $f$ with $x$ mapped to $y$. For example, the above is equal to:
\[
	[r\mapsto\dtop,\:x\mapsto\dbot,\:z\mapsto\dbot].
\]
Likewise, we observe that $x$ may only be written to when $r=0\land z=0$, and that $z$ is never written to. By applying the $\transitions$ function to each precondition-assignment pair and combining the results via $\ijoin$, we derive a guarantee condition $G_0\in\I$ which maps each variable to an over-approximation of the states (as elements of $\D$) under which they can be written to:
\begin{align*}
	G_0=[
	r\mapsto\dtop,\:
	x\mapsto[r\mapsto0,\:z\mapsto0],\:
	z\mapsto\dbot].
\end{align*}
We now convert this guarantee condition into a rely condition for $T_1$. We first map each variable not in $\V_1$ to $\dtop$ and havoc them in each write-condition with $\havocSymb$, effectively removing them from the rely. Since $\V_1=\VP$, there is nothing to do here. To weaken our rely to capture any number of environment transitions, our next step is to derive its transitive closure. The analysis recognises that transitivity is not satisfied due to the following two-step transition allowed under $G_0$:
\begin{center}
\begin{tabular}{ |c|c|c| } 
	\hline
	$r$ & $x$ & $z$ \\
	\hline
	1 & 0 & 0 \\
	\hline
\end{tabular}
$\to$
\begin{tabular}{ |c|c|c| } 
	\hline
	$r$ & $x$ & $z$ \\
	\hline
	0 & 0 & 0 \\
	\hline
\end{tabular}
$\to$
\begin{tabular}{ |c|c|c| } 
	\hline
	$r$ & $x$ & $z$ \\
	\hline
	0 & 1 & 0 \\
	\hline
\end{tabular}
\end{center}
The first transition updates $r$ and is permitted due to $r$ being mapped to $\dtop$. The second transition updates $x$ and is permitted since the second state is captured by the write-condition $[r\mapsto0,\:z\mapsto0]$ of $x$. However, the transition from the first to the third state is not captured by our rely, since it involves an update to $x$, and the first state does not satisfy the write-condition of $x$ since $r=1$. To remedy this, our analysis weakens the write-condition of $x$ using the $close$ function:
\begin{align*}
	R_1=\close(G_0)=[
	r\mapsto\dtop,\:
	x\mapsto[z\mapsto0],\:
	z\mapsto\dbot].
\end{align*}
We now derive the reachable states for $T_1$. However, this time we must make sure to ``stabilise'' each generated intermediary assertion under $R_1$, which ensures our derived reachable states account for all possible interference from $T_1$'s environment. Since $R_1$ is transitive, an assertion $d_1\in\D$ can be stabilised by deriving some $d_2\in\D$ that captures all states reachable within just one step captured by $R_1$ from any pre-state in $d_1$, and taking $d_1\djoin d_2$. This is achieved by applying the $\stabilise$ function, where $\stabilise(R_1,d_1)$ is a $\D$ element greater than $d_1$ that accounts for all one-step transitions captured by $R_1$. Our analysis performs this operation with a degree of precision specified by the parameter $N$ that was provided upon initialisation (see Section~\ref{sec:stabilise}). We derive the following reachable states which, in our small example, are already stable upon generation:

\vspace{-0.5\abovedisplayskip}{\centering\begin{varwidth}[t]{\textwidth}\begin{lstlisting}
T1:
   $\dtop$
1: if z == 1:
       $[z\mapsto1]$
2:     x := 1
       $[z\mapsto1,\:x\mapsto1]$

\end{lstlisting}\end{varwidth}\\}\vspace{\belowdisplayskip}

\noindent We now generate a guarantee $G_1$ for $T_1$:
\begin{align*}
	G_1=[
	r\mapsto\dbot,\:
	x\mapsto[z\mapsto1],\:
	z\mapsto\dbot].
\end{align*}
Notice this guarantee captures our desired constraint: that $x$ is only updated when $z=1$. This guarantee is transitive as-is and since $\V_0=\VP$ we can immediately derive $R_0$ as $G_1$. We now reconstruct (from scratch) the proof outline for $T_0$ under this new rely. In our simple example, the stabilised assertions are identical to the last iteration and we have thus reached a fixpoint where we can return our current rely-guarantee conditions and proof outlines. The conjunction of the derived postconditions for $T_0$ and $T_1$ suffice to establish that $r=0$ holds upon termination of the program.

The following subsections introduce the $\stabilise$, $\transitions$, and $\close$ functions that we use to respectively stabilise assertions under rely conditions, derive guarantee conditions from proof outlines, and derive transitive closures for interference domain elements.

\subsection{The {\normalfont\textit{stabilise}} Function}
\label{sec:stabilise}

In this section, we introduce the interference domain function:
\[
	\stabilise:\I\times\D\to\D
\]
and then define it for our conditional-writes domain. The purpose of this function is to capture the set of all post-states that are reachable within \emph{one} transition of some $i\in\I$ (representing a rely) from any pre-state captured by some $d\in\D$:
\[
	\dgamma(\stabilise(i,d))\supseteq\dgamma(d)\cup\{s_2\in\s\mid\exists s_1\in\dgamma(d):(s_1,s_2)\in\igamma(i)\}.\tag{1}
\]
When $i$ is transitive, this is enough to ensure that $\stabilise(i,d)$ captures all states reachable within any number of transitions in $i$ from an initial state in $d$, thereby fully accounting for interference from an environment represented by $i$. The fact that $\stabilise(i,d)$ also captures $d$ accounts for the fact that the environment may not take any step, and thus accommodates non-reflexive interference domains which do not implicitly capture $\{(s_1,s_2)\in\s\times\s\mid s_1=s_2\}$ as the conditional-writes domain does.

However, note that (1) is an over-approximation and hence there may exist a transition $(s_1,s_2)\in\igamma(i)$ where $s_1\notin\dgamma(d)\land s_1\in\dgamma(\stabilise(i,d))$ but $s_2\notin\dgamma(\stabilise(i,d))$. Consequently, although $\stabilise(i,d)$ soundly captures the effects of interference by $i$ on $d$, it may not be closed under transitions in $i$. The practical effect of this limitation on our analysis is that repeated applications of $\stabilise$ may yield increasingly weaker abstract states, and we may therefore expect the analysis to be more conservative for longer threads, where $\stabilise$ is invoked more often during the generation of proof outlines.

Most existing thread-modular analyses achieve this missing closure by repeatedly applying a rely condition until a fixpoint is reached. In our framework, this is equivalent to defining $\stabilise(i,d)$ to be a fixpoint over some function that satisfies (1). That is, we iteratively weaken $d$ until there are no longer any transitions in $i$ with pre-states in $d$ and post-states not in $d$. In addition to achieving this closure, this approach has the performance advantage that $i$ does not need to be transitive for the sound capture of interference, which avoids the need to derive transitive closures for rely conditions. This raises an interesting question that we have not yet seen discussed in the literature: Is it more expensive to derive a fixpoint for every generated intermediary assertion than it is to derive a transitive closure for every generated rely condition? We investigate this problem by implementing two modes of our analysis: one where rely conditions are made to be transitive and $\stabilise$ is simply defined as a function that satisfies (1), and another where rely conditions are not made to be transitive and $\stabilise$ is defined as a fixpoint over the definition in the first mode. From now on, we use the notation $f^*$ to denote the version of the function $f$ that is used in the second (non-transitive) mode.

Let us now define $\stabilise(i,d)$ in a way that satisfies (1). A naive definition is given by havocing in $d$ every variable with a write-condition that intersects with $d$:
\[
	\havoc{d}{\{v\in\VP\mid d\dmeet i(v)\ne\dbot\}}.
\]
Intuitively, if $d$ does not intersect $i(v)$ then the environment cannot update $v$ while the program is in any state captured by $d$. However, if $d$ does intersect $i(v)$ then the environment may update $v$ while in a state captured by $d$ and, by adjusting $d$ to $\havoc{d}{\{v\}}$, we capture all the possible post-states resulting from an arbitrary update to $v$ from any state captured by $d$.

This definition is sound (in that it satisfies (1)), but also very conservative in that we are failing to account for the fact that, when $v$ is updated in a state captured by $d$, we must also be in a state captured by $i(v)$. To improve precision, rather than havocing variables $v$ in $d$ we could havoc them in $d \dmeet i(v)$ as follows.
\[
	d\djoin\bigdjoin_{v\in\V}\havoc{(d\dmeet i(v))}{\{v\}}.
\]
However, this definition is unsound! We are failing to account for those transitions in $i$ that update multiple variables. To achieve soundness, we must capture the effects of transitions that update any combination $V\in\powset(\VP)$ of variables. To do so, we first capture the set of states under which all variables $v\in V$ can be updated within one transition of $i$ while in a state captured by $d$, which is the meet of $d$ and all their write-conditions: $d\dmeet\bigsqcap^\sharp_{v\in V}i(v)$. We then simulate the effect of variable updates by havocing $V$. Our new definition is thus:
\[
	d\djoin\bigdjoin_{V\in\powset(\VP)}\havoc{\left(d\dmeet\bigdmeet_{v\in V}i(v)\right)}{V}.\tag{2}
\]
However, iterating over the powerset of $\V$ for every call to $\stabilise$ is infeasibly inefficient, so we must now decide whether to make our analysis conservative and fast, or precise and slow. Rather than providing a fixed definition, we accept a parameter $N$ that allows the user to determine the appropriate trade-off. For all variable sets $V$ with cardinality $|V|\le N$, we consider the effects of transitions that update $V$ precisely, as in (2). For the other, larger variable sets, we use a special technique to over-approximate their effects that only requires that we iterate over variable sets with cardinality $N+1$.

To formalise our definition, let us first derive the set of states $X$ that may be reached by executing any transition in $i$ that updates $N$ or fewer variables. This ``precise'' part of our definition is derived from (2):
\[
	X=
	\bigdjoin_{V\in\powset(\V),\:|V|\le N}
	\havoc{
		\left(
			d\dmeet\bigdmeet_{v\in V}
			i(v)
		\right)
	}{
		V
	}.
\]
The corresponding ``conservative'' part computes a coarse over-approximation of the set of states that are reachable from $d$ via transitions in $i$ that update more than $N$ variables. The following technique can be used to derive this over-approximation efficiently by considering only those variable sets with cardinality $N + 1$, rather than all sets with cardinality $>N$. Let us first collect the set $\V_N$ of all of such sets $\{v_1,v_2,...,v_{N+1}\}$ where the meet over the write conditions of $v_1,v_2,...,v_{N+1}$ intersect with $d$. These are the variable sets with cardinality $N+1$ in which all variables may be updated within one step of $i$ from a state captured by $d$:
\[
	\V_N=\left\{V\in\powset(\V) \text{\LARGE $\,\mid\,$} |V|=N+1\land\left(d\dmeet\bigdmeet_{v\in V}i(v)\right)\ne\dbot\right\}.
\]
We now collect an over-approximation of the set of states in $d$ in which \emph{any} of these variable combinations in $\V_N$ may be written to:
\[
	\bigdjoin_{V\in\V_N}d\dmeet\bigdmeet_{v\in V}i(v).
\]
Finally, we havoc in the above all variables that appear in $\VP_N$:
\[
	Y=\havoc{
		\left(\ \ 
			\bigdjoin_{V\in\V_N}
			d\dmeet\bigdmeet_{v\in V}i(v)
		\right)
	}{
		\bigcup_{V\in\V_N}V.
	}
\]
In Appendix~\ref{app:stabilise}, we show that $Y$ over-approximates the post-states of not only the transitions that update those variable sets in $\V_N$, but every subset of $\VP$ with cardinality $>N$.

Our definition of $\stabilise_N(i,d)$ is the join of $d$ with the precise over-approx\-imation $X$ of post-states of transitions that update $N$ or fewer variables, and the less precise over-approximation $Y$ of post-states of transitions that update more than $N$ variables:
\[
	\stabilise_N(i,d)\sdef d\djoin X\djoin Y.
\]
We prove that our definition satisfies (1) in Appendix~\ref{app:stabilise}. In the non-transitive mode of our analysis, we define the stabilisation function as the least fixpoint of the above:
\[
	\func{stabilise}^*_N(i,d)\sdef\mu d:\stabilise_N(i,d).
\]

As indicated above, the computation of $\stabilise_N(i,d)$ may be slow for a large $N$ due to the requirement to iterate over a large portion of $\powset(\VP)$. In practice, we can implement a simple optimisation to reduce this domain. For any variable set $V_0\in\powset(\VP)$, if the meet over the write-conditions of the variables in $V_0$ is $\dbot$, we can ignore any superset of $V_0$. This is due to the fact that, since $V$ is a superset of $V_0$, the meet over the write-conditions of the variables in $V$ is also $\dbot$. This optimisation is formalised in Appendix~\ref{app:stabilise-opt}.

\subsection{The {\normalfont\textit{transitions}} Function}

In addition to the $\stabilise$ function, our analysis requires a method for generating guarantee conditions from proof outlines. Specifically, we require a definition of the function:
\[
	\transitions:\D\times\assign\to\I
\]
Intuitively, $\transitions(d,A)$ captures all state transitions that may occur by executing the (local or global) assignment $A$ under any state captured by $d$. Formally:
\[
	\igamma(\transitions(d,A))\supseteq\left\{(s_1,s_2)\in\s\times\s\mid
	s_1\in\dgamma(d)\land s_2=\sconc(A,s_1)
	\right\}.
\]
In our approach, we simply map each assigned variable to $d$ and map the others to $\dbot$:
\[
	\transitions(d,\langle x_1,...,x_n:=e_1,...,e_n\rangle)\sdef
	\ibot[v\in\{x_1,...,x_n\}\mapsto d].
\]
where $f[x\in X\mapsto y]$ denotes the function $f$ with each $x\in X$ mapped to $y$.

A guarantee for the whole thread can be derived by taking the join of $\transitions(d,A)$ over each precondition-assignment pair $(d,A)$ in its proof outline. We prove the soundness of our definition in Appendix~\ref{app:transitions}.

\subsection{The {\normalfont\textit{close}} Function}
\label{sec:transitive_closure}

Our analysis requires rely conditions to be transitive. In Appendix~\ref{app:transitivity}, we prove that this property is violated if and only if $i \in \I$ captures a transition $(s_1,s_2)$ from a state $s_1\notin\dgamma(i(v))$ to a state $s_2\in\dgamma(i(v))$ for some $v\in\VP$. In this case, $i$ contains a transition $(s_2,s_3)$ that modifies $v$, but cannot contain $(s_1,s_3)$ since $s_1$ is not in $v$'s write-condition. Therefore, $i$ is transitive if and only if, for all $v\in\VP$:
\[
	\forall(s_1,s_2)\in\igamma(i):s_1\notin i(v)\implies s_2\notin i(v).\tag{3}
\]
We therefore define a function:
\[
	\close:\I\to\I
\]
that satisfies (3) for all $v\in\VP$.

To preserve the property that our rely captures all transitions of the other threads' guarantee conditions, we also require $i\isqsubseteq \close(i)$, i.e., in deriving $close(i)$ we are only able to \emph{weaken} (that is, add states to) the write-conditions that variables are mapped to in $i$.

Now, let us imagine that $i$ is not transitive and it therefore captures a transition $(s_1,s_2)$ from a state $s_1\notin\dgamma(i(v))$ to a state $s_2\in\dgamma(i(v))$ for some $v\in\VP$. The only means of rectifying such a counterexample via weakening a write-condition is to add $s_1$ to $\dgamma(i(v))$. A solution for deriving a transitive closure for $i$ thus depends on a method for deriving, for some $v\in\VP$, an over-approximation of all states $s_1\notin \dgamma(i(v))$ where there exists a transition $(s_1,s_2)\in i$ for some $s_2\in \dgamma(i(v))$.

A very coarse approach may be to simply collect all the states captured by the write-conditions of the variables that appear in $i(v)$. However, since we only invoke the transitive closure procedure once per derivation of a rely condition, and (especially due to the abstractions provided by $\D$ and $\I$) we do not expect to perform many of these derivations during our analysis, it may be useful to investigate more precise methods.

The intuition for our method is as follows: Suppose there exists a violating transition $(s_1,s_2)$ as described above, that only modifies variable $v'$. Note that $v'$ must be a different variable to $v$, since $s_1\notin\dgamma(i(v))$ and thus $v$ cannot be modified in the state $s_1$. Since our transition only modifies $v'$, we must have that $s_1$ belongs to the set of states obtained by havocing $v'$ in $s_2$:
\[
	s_1\in\{s\in\s\mid\forall u\in\VP\setminus\{v'\}:s(u)=s_2(u)\}
\]
Since $s_2\in \dgamma(i(v))$, we can obtain an over-approximation of this set with $\havoc{i(v)}{v'}$. Moreover, it should be obvious that $s_1$ is captured by $i(v')$, by virtue of $i$ capturing $(s_1,s_2)$ where $s_1(v')\ne s_2(v')$. Combining these two facts, we have:
\[
	s_1\in\dgamma(\havoc{i(v)}{\{v'\}})\cap\dgamma(i(v')).
\]
In our analysis, we can derive an over-approximation of this set as a $\D$ element:
\[
	\havoc{i(v)}{\{v'\}}\dmeet i(v').
\]
Intuitively, this represents the subset of states in $i(v')$ that may result in a post-state in $i(v)$ when $v'$ is updated. To account for the update of any other variable, we weaken $i(v)$ by taking the join of the above for every $v'\in\VP\setminus\{v\}$:
\[
	i(v)\djoin\bigdjoin_{v'\in\VP\setminus\{v\}}\havoc{i(v)}{\{v'\}}\dmeet i(v').
\]
Since $\havoc{i(v)}{\{v\}}\dmeet i(v)= i(v)$, we may simplify this definition by adding $v$ to the domain of the join:
\[
	\bigdjoin_{v'\in\VP}\havoc{i(v)}{\{v'\}}\dmeet i(v').
\]
Now, recall that our reasoning above relied on $v'$ being the only variable updated, whereas our conditional-writes domain captures transitions that modify multiple variables. For a true over-approximation of violating pre-states, we must actually iterate through the entire powerset of $\VP$:
\[
	\bigdjoin_{V\in\powset(\VP)}\havoc{i(v)}{V}\dmeet\bigdmeet_{v'\in V}i(v').\tag{4}
\]
However, we discuss optimisations in Appendix~\ref{sec:optimisations} that significantly reduce this search space.

We now have a method for weakening $i(v)$ so as to capture the pre-states of all transitions in $i$ with post-states in $i(v)$. However, the new states added to $i(v)$ may themselves be post-states of other transitions captured by $i$, and thus we must derive a fixpoint on (4) to achieve (3) for $v$. Moreover, the new transitions added to $i$ by weakening $v$'s write-condition may invalidate (3) for the other variables, and thus to compute a transitive closure with this method, we must derive a fixpoint on (4) over \emph{all} variables:
\[
	\close(i)\sdef\mu i:\lambda v:\bigdjoin_{V\in\powset(\VP)}\havoc{i(v)}{V}\dmeet\bigdmeet_{v'\in V}i(v').
\]
Let $i'=\close(i)$. Upon termination of the fixpoint, we have:
\[
	\forall v\in\VP:i'(v)\dsqsupseteq\bigdjoin_{V\in\powset(\VP)}
	\havoc{i'(v)}{V}\dmeet\bigdmeet_{v'\in V} i'(v')
\]
which we prove to be sufficient for transitivity in Appendix~\ref{app:transitivity}.

\section{Rely-Guarantee Generation}
\label{sec:analysis}

A complete summary of the data structures and functions defined in the condi\-tional-writes domain is provided in Appendix~\ref{app:summary}. The purpose of this section is to formalise the process described in Section~\ref{sec:example}, thereby providing an example of integrating our domain into a complete program analysis, similar to the style of~\cite{mine17}.

Our analysis involves iteratively building up rely conditions $R:\T\to\I$ and guarantee conditions $G:\T\to\I$ for each thread. Our rely conditions are directly computed from the generated guarantee conditions using the $\rely$ function:
\[
	\rely:\T\times(\T\to\I)\to\I
\]
which, in the transitive mode of our analysis, is defined as the transitive closure of the join over all environment guarantees:
\[
	\rely(t,G)\sdef\func{close}\circ\func{reduce}_t\left(\ \ \bigijoin_{t'\in\T\setminus\{t\}}G(t')\right)
\]
where $\func{reduce}_t(i)$ weakens $i$ to eliminate all variables not in $\V_t$:
\[
	\func{reduce}_t(i)\sdef i\left[v\in\V\setminus\V_t\mapsto\dtop\right][v\in\V_t\mapsto\havoc{i(v)}{\V\setminus\V_t}].
\]
When running in the non-transitive mode described in Section~\ref{sec:stabilise}, we simply skip the transitive closure step:
\[
	\rely^*(t,G)\sdef\func{reduce}_t\left(\ \ \bigijoin_{t'\in\T\setminus\{t\}}G(t')\right).
\]

To derive a guarantee condition for a thread $t\in\T$, we generate a thread-local proof outline of $t$ using a state domain with lattice $\D$, with intermediary assertions weakened to account for interference specified by $t$'s rely. As reachable states are discovered, we use the $\transitions$ function to derive the state transitions they induce, and collect these transitions into a new guarantee for $t$. This analysis is formalised with the semantics $\mathbb{S}$ defined in Figure~\ref{fig:semantics}, where $\mathbb{S}(c,\:(d,\:r,\:g))$ is a tuple of which the third element is a sound guarantee condition of program $c$ with respect to precondition $d$ and rely $r$. To provide a concise formalisation, we define a join operator $\sqcup$ on tuples of type $\D\times\I\times\I$:
\[
	\sqcup:(\D\times\I\times\I)\times(\D\times\I\times\I)\to (\D\times\I\times\I)
\]
\[
	(d_1,\:r_1,\:g_1)\sqcup(d_2,\:r_2,\:g_2)\sdef\left(d_1\djoin d_2,\:r_1\ijoin r_2,\:g_1\ijoin g_2\right).
\]

\begin{figure}
\[
	\mathbb{S}:\instr\times(\D\times\I\times\I)\to\D\times\I\times\I
\]
\vspace{-5mm}
\begin{align*}
	&\mathbb{S}(c_1;c_2,\:X)\sdef\mathbb{S}(c_2,\:\mathbb{S}(c_1,\:X))
\end{align*}
\vspace{-5mm}
\begin{align*}
	\mathbb{S}(c:\assign,\:(d,\:r,\:g))\sdef&\text{let }d'=\stabilise_N(r,\:d)\text{ in}\\
	&(\sd(c,\:d'),\:r,\:g\ijoin\transitions(d',\:c))
\end{align*}
\vspace{-5mm}
\begin{align*}
	\mathbb{S}(\text{if }b\text{ then }c_1\text{ else }c_2,\:X)\sdef
	&\text{let }X_1=\mathbb{S}(c_1,\:\mathbb{B}(b,\:X))\text{ in }\\
	&\text{let }X_2=\mathbb{S}(c_2,\:\mathbb{B}(\neg b,\:X))\text{ in }\\
	&X_1\sqcup X_2
\end{align*}
\vspace{-5mm}
\begin{align*}
	&\mathbb{S}(\text{while }b\text{ do }c,\:X) \sdef
	\mathbb{B}(\neg b,\:\mu X\:.\:X\sqcup\mathbb{S}(c,\:\mathbb{B}(b,\:X))
\end{align*}
\vspace{-5mm}
\begin{align*}
	&\mathbb{B}(b,\:(d,\:r,\:g))\sdef(\bd(b,\:\stabilise_N(r,\:d)),\:r,\:g)
\end{align*}
\caption{Collective semantics for generating guarantee conditions.}
\label{fig:semantics}
\end{figure}

We define our entire analysis by the function $\func{Analyse}$, which takes a program precondition and returns a set of rely and guarantee conditions for the program:
\[
	\func{Analyse}:\D\to(\T\mapsto\I)\times(\T\mapsto\I).
\]
It is defined as:
\begin{align*}
	\func{Analyse}(\var{pre})\sdef&\mu R,G:\\
	&~~~(\lambda t\in\T:\rely(t,\:G),\ \,
	\lambda t\in\T:\mathbb{S}(t,\:(\var{pre},\:\rely(t,\:G),\:\ibot)).3)
\end{align*}
where $X.3$ refers to the third element of tuple $X$. In the non-transitive mode of our analysis, the functions $\rely^*$ and $\stabilise^*_N$ are used in place of $\rely$ and $\stabilise_N$ respectively.

\section{Evaluation}
\label{sec:eval}

We implemented both modes of our analysis in a simple Python prototype\footnote{Available at https://doi.org/10.5281/zenodo.18589190.}, and evaluated each mode using two state domains: the constant domain, and the powerset completion of the constant domain~\cite{mine17}. For all tests, we choose $N=|\V|$ and $\V_t=\V$ for each thread to attain maximum precision. We evaluated our analysis on five concurrent programs that are verifiable with the conditional-writes domain, and record our results in Table~\ref{tab:verification_results}. All experiments were conducted on a 2019 MacBook Pro with a 2.3 GHz Intel Core i5 processor.

\begin{table}[h]
\centering
\small\vspace{-4mm}
\begin{tabular}{|l|ccc|ccc|ccc|ccc|}
\hline
\multirow{3}{*}{Program} & \multicolumn{6}{c|}{Constant Domain} & \multicolumn{6}{c|}{Powerset Completion of Const. Dom.} \\
\cline{2-13}
& \multicolumn{3}{c|}{Non-Transitive} & \multicolumn{3}{c|}{Transitive} & \multicolumn{3}{c|}{Non-Transitive} & \multicolumn{3}{c|}{Transitive} \\
\cline{2-13}
& Ver. & Ops & Time (s) & Ver. & Ops & Time (s) & Ver. & Ops & Time (s) & Ver. & Ops & Time (s) \\
\hline
reset & $\times$ & 123 & 0.0004 & $\times$ & 168 & 0.0005 & \checkmark & 533 & 0.0020 & $\times$ & 786 & 0.0067 \\
circular & \checkmark & 620 & 0.0014 & \checkmark & 836 & 0.0022 & \checkmark & 1019 & 0.0053 & \checkmark & 1235 & 0.0064 \\
mutex1 & \checkmark & 67 & 0.0004 & \checkmark & 105 & 0.0006 & \checkmark & 237 & 0.0013 & \checkmark & 342 & 0.0021 \\
mutex2 & $\times$ & 48 & 0.0006 & $\times$ & 51 & 0.0003 & \checkmark & 318 & 0.0018 & \checkmark & 339 & 0.0016 \\
spinlock & $\times$ & 336 & 0.0012 & $\times$ & 291 & 0.0009 & \checkmark & 1291 & 0.0115 & \checkmark & 2355 & 0.0116 \\
\hline
\end{tabular}\vspace{2mm}
\caption{Verification results across different state domains and analysis modes.}\vspace{-8mm}
\label{tab:verification_results}
\end{table}

The program $\var{spinlock}$ can be found in  \cite{winter21}. All other programs were crafted by the authors and can be found alongside the prototype implementation. We record whether the program was verified by our framework under \emph{Ver.}, the number of $\djoin$ and $\dmeet$ operations under \emph{Ops}, and the performance time under \emph{Time}.

Our preliminary results show that the transitive mode is nearly always slower and requires more state lattice operations than the non-transitive mode. Moreover, the transitive mode is less precise in that it fails to verify $\var{reset}$ with the powerset completion domain, whereas the non-transitive mode succeeds. This provides a degree of justification for the widely adopted fixpoint approach to stabilisation. However, we note that most of our benchmark programs are small algorithms, and these results may not generalise to programs with longer thread bodies, where the ratio of $\stabilise$ to $\rely$ calls is higher.

\section{Related Work}
\label{sec:rel}

Our work is inspired by the analyses introduced by Miné~\cite{mine12-older,mine12,mine14} and Monat and Min\'e~\cite{monat17} for generating rely-guarantee conditions with abstract interpretation. Of these, the framework presented in \cite{monat17} is the most similar to ours in that it is defined over an abstract domain for representing states, another abstract domain for representing interferences, and their own versions of the $\stabilise$ and $\transitions$ functions for converting between the two. However, their analysis requires that interferences are represented by a standard numeric abstract domain over primed and unprimed variables, where primed variables represent their values in the post-state of the interference. This allows one to use existing abstract domains to represent interferences, and also allows for partially-defined $\stabilise$ and $\transitions$ functions that only require the user to define functions for adding and removing certain variables (for example, the primed variables) from the chosen abstract domains. However, we found that the conditional-writes domain cannot be neatly expressed as a numeric domain over primed and unprimed variables. One reason for this is that it imposes different structural constraints on the primed and the unprimed variables, such as the fact that a primed variable can only be constrained to be equal to its unprimed counterpart (when the write-condition for the variable is false) and is otherwise unconstrained. Moreover, our framework allows users to fully define the $\stabilise$ and $\transitions$ functions, which allows for more complex definitions such as that of our $\stabilise$ function. Finally, all of the analyses in the aforementioned works perform stabilisation by deriving a fixpoint over the rely condition, whereas we provide an additional analysis mode where rely conditions are made to be transitive and interference can therefore be accounted for by over-approximating just one step of the rely. 

Our analysis is underpinned by the rely-guarantee proof method \cite{jones83,xu97}, which is known to be better suited to some classes of programs than other logics such as Concurrent Separation Logic~\cite{ohearn07,feng07}. Many techniques for generating rely-guarantee conditions do not use abstract interpretation and therefore do not enjoy the efficiency-precision trade-off afforded by an abstract domain~\cite{gupta11,ezudheen18,flanagan03,le20}. To the authors' knowledge, the existing techniques that do are restricted to particular abstractions of interferences such as non-relational domains that map variables to sets of constant post-state values~\cite{carre09,mine12-older}, or numeric domains over primed and unprimed variables~\cite{monat17}. In contrast, we abstract interferences into \emph{interference domains} that can represent arbitrary structures of rely-guarantee conditions, and allow these to be paired with any \emph{state domain} that implements the functions used in the definitions of $\stabilise$ and $\transitions$.

\section{Conclusion}
\label{sec:con}

In this work, we have introduced the notion of an \emph{interference domain} for abstracting the set of state transitions that may be executed by a thread or its environment. An interference domain is parameterised by a compatible state domain, and is defined by a lattice over sets of pairs of states representing transitions, as well as the functions $\stabilise$ and $\transitions$ for respectively applying and deriving these transitions. We have then introduced a simple example of an interference domain, called the conditional-writes domain, for capturing rely-guarantee conditions that constrain only the conditions under which variables are updated, whilst disregarding what they are updated to. We have then described a simple analysis framework that demonstrates how interference domains can be used to generate sound rely-guarantee conditions for concurrent programs.

A significant limitation of this framework is the overly simple target language. In future, we would like to extend our framework to support common language features like procedure calls, dynamic thread creation, and mutexes. We would also like to create a suite of interference domains that are more suitable for sophisticated programs with complex data structures such as arrays and pointers. To this end, we intend to adopt an example-driven approach where we will identify useful interference domains by inspecting concurrent programs used in practice. Finally, we intend to explore existing static analysis tools in which to implement our framework, to improve performance and enable better comparison with existing analyses.

\begin{credits}
 \subsubsection{\ackname}
 We would like to thank Kirsten Winter for her insightful comments and discussions on this paper.
 
\subsubsection{\discintname}
The authors have no competing interests to declare that are relevant to the content of this article.
\end{credits}

\bibliographystyle{splncs04}
\bibliography{main}

\newpage
\appendix

\section{Proofs}

\subsection{Soundness of {\normalfont\textit{stabilise}}}
\label{app:stabilise}

\begin{lemma}
For any $i\in\I$ and $d\in\D$ we have:
\[
	\dgamma(\stabilise_N(i,d))\supseteq\dgamma(d)\cup\{s_2\in\s\mid\exists s_1\in\dgamma(d):(s_1,s_2)\in\igamma(i)\}.
\]
\end{lemma}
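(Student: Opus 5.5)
We prove the lemma by taking an arbitrary state on the right-hand side and exhibiting one of the three joinands of $\stabilise_N(i,d)=d\djoin X\djoin Y$ whose concretisation contains it. Throughout, we use that $\dgamma$ is monotone, so $\dgamma(a)\subseteq\dgamma(a\djoin b)$. We also use that $\dgamma$ distributes over meets soundly, i.e.\ $\dgamma(a)\cap\dgamma(b)\subseteq\dgamma(a\dmeet b)$; this is the standard soundness requirement on abstract meet and is implicit in Section~\ref{sec:abs}. States in $\dgamma(d)$ are covered immediately by the first joinand. So we fix $s_1\in\dgamma(d)$ and $(s_1,s_2)\in\igamma(i)$, and let $V=\{v\in\VP\mid s_2(v)\ne s_1(v)\}$ be the set of variables actually changed. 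By the definition of $\igamma$, $s_1\in\dgamma(i(v))$ for every $v\in V$. Hence $s_1\in\dgamma\bigl(d\dmeet\bigdmeet_{v\in V}i(v)\bigr)$, and when $V=\emptyset$ the empty meet is $\dtop$ and the meet reduces to $d$.

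\emph{Case $|V|\le N$.} $V$ indexes one of the joinands of $X$. Since $s_2$ agrees with $s_1$ outside $V$, the defining inequality of $\havocSymb$ gives $s_2\in\dgamma\bigl(\havoc{(d\dmeet\bigdmeet_{v\in V}i(v))}{V}\bigr)\subseteq\dgamma(X)$. (For $V=\emptyset$ we get $s_2=s_1\in\dgamma(d)$ directly.)

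\emph{Case $|V|>N$.} This is the main step, the argument that $Y$ covers all larger sets. We choose any subset $W\subseteq V$ with $|W|=N+1$. Since $\bigdmeet_{v\in V}i(v)\dsqsubseteq\bigdmeet_{v\in W}i(v)$, we have $s_1\in\dgamma\bigl(d\dmeet\bigdmeet_{v\in W}i(v)\bigr)$, so this element is not $\dbot$ and $W\in\V_N$. Therefore $s_1$ lies in $\dgamma$ of the join $J=\bigdjoin_{W'\in\V_N}d\dmeet\bigdmeet_{v\in W'}i(v)$. Next we need every variable of $V$ to lie in $U=\bigcup_{W'\in\V_N}W'$. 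For each $v\in V$, pick a subset $W_v\subseteq V$ of size $N+1$ containing $v$; this is possible because $|V|\ge N+1$. The same argument shows $W_v\in\V_N$, so $V\subseteq U$. Since $s_2$ agrees with $s_1\in\dgamma(J)$ outside $V\subseteq U$, the $\havocSymb$ inequality gives $s_2\in\dgamma(\havoc{J}{U})=\dgamma(Y)$.

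Combining the cases, every such $s_2$ lies in $\dgamma(d)\cup\dgamma(X)\cup\dgamma(Y)\subseteq\dgamma(\stabilise_N(i,d))$. The only delicate points are the choice of the covering subsets $W_v$, which requires $|V|>N$, and the reliance on soundness of $\dmeet$ and on the $\havocSymb$ inequality. None of the three algebraic properties of $\havocSymb$ is needed here.
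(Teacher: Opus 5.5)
Your proof is correct, and it handles $Y$ differently from the paper. The paper first proves a lattice-level inequality. Writing $M_i(V')$ for $\bigdmeet_{v\in V'}i(v)$, it shows $\havoc{(d\dmeet M_i(V'))}{V'}\dsqsubseteq Y$ for every $V'$ with $|V'|>N$. The $\dbot$ case is handled by property 1 of $\havocSymb$, and the conclusion uses property 3 (monotonicity of $\havocSymb$ in both arguments). This yields $\stabilise_N(i,d)\dsqsupseteq d\djoin\bigdjoin_{V\in\powset(\V)}\havoc{(d\dmeet M_i(V))}{V}$, i.e.\ $\stabilise_N$ lies above the full-powerset formula (2). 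Only then does the paper run a single concrete argument with the set $P$ of changed variables.

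You instead stay at the level of states throughout. The witness $s_1$ itself certifies that each $(N+1)$-subset $W_v$ of the changed set lies in $\V_N$, since its concretisation is nonempty and $\dgamma(\dbot)=\emptyset$. You then apply the defining inclusion of $\havocSymb$ directly to $J$ with the larger set $U\supseteq V$, because agreeing with $s_1$ outside $V$ implies agreeing outside $U$. The combinatorial core is shared: cover $V$ by its $(N+1)$-subsets, all of which lie in $\V_N$.

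Your route buys generality. As you note, it needs only the soundness inclusion for $\havocSymb$ and soundness of $\dmeet$, not properties 1 or 3. The paper's route buys a stronger intermediate fact: an ordering in $\D$ rather than a mere inclusion of concretisations. That fact separates the precision parameter $N$ from the soundness of the ideal definition (2), and it substantiates the main-text claim that $Y$ dominates the contribution of every variable set of size greater than $N$.
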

\begin{proof}
For readability, let us first define the meet over all write conditions in some $i\in\I$ of a variable set $V$ as $M_i(V)$:
\[
	M_i:\powset(\VP)\to\D
\]
\[
	M_i(V)\sdef\bigdmeet_{v\in V}i(v).
\]
The definition of $\stabilise_N$ is thus:
\begin{align*}
	\stabilise_N&(i,d)\sdef\\&\text{let }X=\bigdjoin_{V\in\powset(\V),\:|V|\le N}
	\havoc{
		\left(
			d\dmeet M_i(V)
		\right)
	}{
		V
	}\text{ in}\\
	&\text{let }\V_N=\left\{V\in\powset(\V)\text{\LARGE{$\,\mid\,$}} |V|=N+1\land d\dmeet M_i(V)\ne\dbot\right\}\text{ in}\\
	&\text{let }Y=\havoc{
		\left(\ \ 
			\bigdjoin_{V\in\V_N}
			d\dmeet M_i(V)
		\right)
	}{
		\bigcup_{V\in\V_N}V
	}\text{ in}\\
	&d\djoin X\djoin Y.
\end{align*}
The definition is a composition of a ``precise'' component $X$ that captures transitions modifying $N$ or fewer variables, and a ``conservative'' component $Y$ that captures transitions modifying more than $N$ variables. As a first step, we prove the conservativeness of $Y$. Formally:
\[
	\bigdjoin_{V\in\powset(\V),\:|V|>N}
	\havoc{
		\left(
			d\dmeet M_i(V)
		\right)
	}{
		V
	}
	\dsqsubseteq
	Y.
\]
Since for any $d_1,d_2,d_3\in\D$ we have $d_1\dsqsubseteq d_3\land d_2\dsqsubseteq d_3\implies d_1\djoin d_2\dsqsubseteq d_3$, it suffices to prove that, for an arbitrary $V'\in\powset(\V)$ where $|V'|>N$:
\[
	\havoc{
		\left(
			d\dmeet M_i(V')
		\right)
	}{
		V'
	}
	\dsqsubseteq
	Y.
\]
Recall our constraint on $\havocSymb$:
\[
	\forall V\in\powset(\VP):\havoc{\dbot}{V}=\dbot.
\]
Thus if $d\dmeet M_i(V')=\dbot$ then the above reduces to $\dbot\dsqsubseteq Y$ which holds. Now, suppose $d\dmeet M_i(V')\ne\dbot$. Expanding $Y$, our proof obligation becomes:
\[
	\havoc{
		\left(
			d\dmeet M_i(V')
		\right)
	}{
		V'
	}\dsqsubseteq
	\havoc{
		\left(\ \ 
			\bigdjoin_{V\in\V_N}
			d\dmeet M_i(V)
		\right)
	}{
		\bigcup_{V\in\V_N}V
	}.\tag{1}
\]
Since $|V'|\ge N+1$, it is the union of all its $(N+1)$-subsets $V'_{\var{sub}}$, defined formally:
\[
	V'_{\var{sub}}=\{V\in\powset(\V)\mid V\subseteq V'\land|V|=N+1\}.
\]
Moreover, for any $V\in V'_{\var{sub}}$, we have $M_i(V')\dsqsubseteq M_i(V)$ and so $d\dmeet M_i(V)\ne\dbot$. Thus, by definition of $\V_N$:
\[
	V'_{\var{sub}}\subseteq\V_N.\tag{2}
\]
It follows that:
\[
	V'=\bigcup_{V\in V'_{\var{sub}}}V\subseteq\bigcup_{V\in\V_N}V.\tag{3}
\]
Let $V^-$ be a variable set in $V'_{\var{sub}}$. Then, since $V^-$ is a subset of $V'$, we have $M_i(V')\dsqsubseteq M_i(V^-)$, and thus it follows from (2) that:
\[
	d\dmeet M_i(V')\dsqsubseteq d\dmeet M_i(V^-)\dsqsubseteq\bigdjoin_{V\in\V_N}d\dmeet M_i(V).\tag{4}
\]
Recall that $\havocSymb$ is monotonic in that:
\[
	\forall V_1,V_2\in\powset(\V),d_1,d_2\in\D:V_1\subseteq V_2\land d_1\dsqsubseteq d_2\implies\havoc{d_1}{V_1}\dsqsubseteq\havoc{d_2}{V_2}.
\]
Thus from (3) and (4) we can deduce (1). Therefore $Y$ is indeed conservative and we can reduce our definition accordingly:
\begin{align*}
	&\stabilise_N(i,d)\\
	&=d\djoin X\djoin Y\\
	&=d\djoin X\djoin
	\havoc{
		\left(\ \ 
			\bigdjoin_{V\in\V_N}
			d\dmeet M_i(V)
		\right)
	}{
		\bigcup_{V\in\V_N}V
	}\\
	&\dsqsupseteq d\djoin X\djoin
	\bigdjoin_{V\in\powset(\V),\:|V|>N}
	\havoc{
		\left(
			d\dmeet M_i(V)
		\right)
	}{
		V
	}\\
	&=d\djoin \left(\ \ \bigdjoin_{V\in\powset(\V),\:|V|\le N}
	\havoc{
		\left(
			d\dmeet M_i(V)
		\right)
	}{
		V
	}\right)\\
	&\phantom{=d}\:\mkern1mu\djoin
	\left(\ \ \bigdjoin_{V\in\powset(\V),\:|V|>N}
	\havoc{
		\left(
			d\dmeet M_i(V)
		\right)
	}{
		V
	}\right)\\
	&=d\djoin \bigdjoin_{V\in\powset(\V)}
	\havoc{
		\left(
			d\dmeet M_i(V)
		\right)
	}{
		V
	}.
\end{align*}
Thus we aim to prove:
\begin{align*}
	&\dgamma\!\!\left(\ \ d\djoin 
	\bigdjoin_{V\in\powset(\V)}
	\havoc{
		\left(
			d\dmeet M_i(V)
		\right)
	}{
		V
	}
	\right)
	\\&\supseteq
	\dgamma(d)\cup\{s_2\in\s\mid\exists s_1\in\dgamma(d):(s_1,s_2)\in\igamma(i)\}.\tag{5}
\end{align*}
Case 1: Let $s\in\dgamma(d)$. Then for any $d'\in\D$ we have $s\in\dgamma(d\djoin d')$ and thus $s$ belongs to the left-hand side of (5).
Case 2: Let $s_2\in\s$ where $\exists s_1\in\dgamma(d):(s_1,s_2)\in\igamma(i)$. Then from the concretisation function of $\I$ we have:
\[
	\forall v\in\VP:s_2(v)\ne s_1(v)\implies s_1\in\dgamma(i(v)).
\]
Let $P$ be the maximal subset of $\VP$ where $\forall v\in P:s_2(v)\ne s_1(v)$. Then $s_1\in\dgamma(M_i(P))$, and since $s_1\in\dgamma(d)$ we have $s_1\in\dgamma(d\dmeet M_i(P))$. Since $P$ is the maximal subset as defined above, we have $\forall v\in\VP\setminus P:s_1(v)=s_2(v)$. Now recall our constraint on the `$\havocSymb$' operator:
\[
	\dgamma(\havoc{d}{V})\supseteq\left\{
		s\in\s\mid
		\exists s_0\in\dgamma(d):
			\forall v\in\VP\setminus V:s(v)=s_0(v)
	\right\}.
\]
Thus, taking $s_1$ for $s_0$ and $d\dmeet M_i(P)$ for $d$ in the above, we have:
\[
	s_2\in\dgamma(d\dmeet\havoc{M_i(P))}{P}.
\]
Therefore, since $P\in\powset(\VP)$ we have:
\[
	s_2\in\dgamma\!\!\left(\ \ 
	\bigdjoin_{V\in\powset(\V)}
	\havoc{
		\left(
			d\dmeet M_i(V)
		\right)
	}{
		V
	}
	\right)
\]
and thus $s_2$ belongs to the left hand side of (5).
\end{proof}

\subsection{Soundness of {\normalfont\textit{transitions}}}
\label{app:transitions}

\begin{lemma}
For any $d\in\D$ and assignment $A=\langle x_1,...,x_n:=e_1,...,e_n\rangle$ we have:
\[
	\igamma(\transitions(d,A))\supseteq\left\{(s_1,s_2)\in\s\times\s\mid
	s_1\in\dgamma(d)\land s_2=\sconc(A,s_1)
	\right\}.
\]
\end{lemma}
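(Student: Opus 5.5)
The plan is to unfold the two concretisation functions and argue pointwise. Fix an arbitrary pair $(s_1,s_2)$ with $s_1\in\dgamma(d)$ and $s_2=\sconc(A,s_1)$. By the definition of $\igamma$, it suffices to show that for every $v\in\VP$, if $s_2(v)\ne s_1(v)$ then $s_1\in\dgamma(\transitions(d,A)(v))$. By definition,
\[
	\transitions(d,A)=\ibot[v\in\{x_1,...,x_n\}\mapsto d],
\]
so the write-condition of $v$ is $d$ when $v\in\{x_1,...,x_n\}$ and $\dbot$ otherwise.

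I would split on whether $v$ is one of the assigned variables.

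\textbf{Case $v\in\{x_1,...,x_n\}$.} Here $\transitions(d,A)(v)=d$. We already have $s_1\in\dgamma(d)$, so the implication holds regardless of whether $v$ changes.

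\textbf{Case $v\notin\{x_1,...,x_n\}$.} Here the write-condition is $\dbot$, and $\dgamma(\dbot)=\emptyset$. The implication therefore holds only if its antecedent is false, i.e.\ only if $s_2(v)=s_1(v)$.

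The one step that needs care is this second case. The paper gives $\sconc$ only informally, as ``the program state reached by executing $c$ in $s$''. I would therefore make explicit the standard semantics of a multiple assignment:
\[
	\sconc(\langle x_1,...,x_n:=e_1,...,e_n\rangle,s)=s[x_j\mapsto \llbracket e_j\rrbracket(s)]_{j=1}^{n},
\]
where all right-hand sides are evaluated in $s$. Under this semantics every variable outside $\{x_1,...,x_n\}$ keeps its value, which gives the frame property $s_2(v)=s_1(v)$ in the second case. Since the assignment is atomic, $(s_1,s_2)$ is a single step, and no other intermediate states need to be considered. Combining the two cases yields $(s_1,s_2)\in\igamma(\transitions(d,A))$, and because the pair was arbitrary, the inclusion follows.
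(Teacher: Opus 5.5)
Your proof is correct and takes essentially the same route as the paper. Both unfold $\igamma$ and assume the frame property of $\sconc$ (the paper also states it as an expectation of the semantics) to conclude that any changed variable lies in $\{x_1,\dots,x_n\}$, where the write-condition is $d$ and $s_1\in\dgamma(d)$; your case split on membership in $\{x_1,\dots,x_n\}$ is just the contrapositive phrasing of the paper's argument.
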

\begin{proof}
Recall the definition of $\transitions$:
\[
	\transitions(d,\langle x_1,...,x_n:=e_1,...,e_n\rangle)\sdef
	\ibot[v\in\{x_1,...,x_n\}\mapsto d]
\]
and the concretisation function of $\I$:
\[
	\igamma(i)\sdef\left\{
	\left(s,s'\right)\in\s\times\s\mid
	\forall v\in\VP:s'(v)\ne s(v)\implies s\in\dgamma(i(v))
	\right\}.
\]
It suffices to prove that for all $(s_1,s_2)\in\s\times\s$ where $s_1\in\dgamma(d)\land s_2=\sconc(A,s_1)$, we have $(s_1,s_2)\in\igamma(\transitions(d,A))$. That is:
\[
	\forall v\in\VP:s_2(v)\ne s_1(v)\implies s_1\in\dgamma(\ibot[v_0\in\{x_1,...,x_n\}\mapsto d](v)).\tag{1}
\]
We expect our concrete semantics to satisfy:
\[
	x_1'=e_1\land...\land x_n'=e_n\land\forall v\notin\{x_1,...,x_n\}:v'=v.
\]
Let $v\in\VP$ be some variable where $s_2(v)\ne s_1(v)$. Then $v\in\{x_1,...,x_n\}$ due to the above. Thus:
\[
	\ibot[v_0\in\{x_1,...,x_n\}\mapsto d](v)= d
\]
and thus the consequent in (1) reduces to $s_1\in\dgamma(d)$ which is already established.
\end{proof}

\subsection{Soundness of {\normalfont\textit{close}}}
\label{app:transitivity}

\begin{lemma}
Let some $i\in\I$ satisfy:
\[
	\forall v\in\VP:i(v)\dsqsupseteq\bigdjoin_{V\in\powset(\VP)}
	\havoc{i(v)}{V}\dmeet\bigdmeet_{v'\in V} i(v').\tag{1}
\]
Prove that $i$ satisfies the transitivity property:
\[
	\forall s_1,s_2,s_3\in\s:(s_1,s_2)\in\igamma(i)\land(s_2,s_3)\in\igamma(i)\implies
	(s_1,s_3)\in\igamma(i).
\]
\end{lemma}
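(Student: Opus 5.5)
Fix $s_1,s_2,s_3$ with $(s_1,s_2)\in\igamma(i)$ and $(s_2,s_3)\in\igamma(i)$. By the definition of $\igamma$, we must show that for every $v\in\VP$ with $s_3(v)\ne s_1(v)$ we have $s_1\in\dgamma(i(v))$. We split into two cases according to whether $v$ already changes in the first step.

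In the first case, $s_2(v)\ne s_1(v)$. Then $s_1\in\dgamma(i(v))$ follows immediately from $(s_1,s_2)\in\igamma(i)$, and nothing more is needed.

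In the second case, $s_2(v)=s_1(v)$. Since $s_3(v)\ne s_1(v)=s_2(v)$, the second transition changes $v$, so $s_2\in\dgamma(i(v))$ by $(s_2,s_3)\in\igamma(i)$. We now use hypothesis (1) to pull this back to $s_1$. Let $P$ be the set of variables changed by the first step, $P=\{u\in\VP\mid s_2(u)\ne s_1(u)\}$. This is the same choice of $P$ as in the soundness proof of $\stabilise$ in Appendix~\ref{app:stabilise}. We then show the following two facts.
\begin{enumerate}
\item $s_1\in\dgamma(\havoc{i(v)}{P})$. This holds by the defining inequality of $\havocSymb$ with $s_0=s_2\in\dgamma(i(v))$, because $s_1$ and $s_2$ agree on every variable outside $P$.
\item $s_1\in\dgamma(i(v'))$ for every $v'\in P$. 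This holds by $(s_1,s_2)\in\igamma(i)$.
\end{enumerate}
Together these give $s_1\in\dgamma(\havoc{i(v)}{P})\cap\bigcap_{v'\in P}\dgamma(i(v'))$.

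To turn this into membership in $\dgamma(i(v))$, we need $\dgamma$ to commute appropriately with meets and joins. We assume $\dgamma$ is monotone (so $\dgamma(a)\cup\dgamma(b)\subseteq\dgamma(a\djoin b)$) and that $\dgamma(a)\cap\dgamma(b)\subseteq\dgamma(a\dmeet b)$, i.e.\ the meet is sound for intersection. The paper's own use of $\dmeet$ in the $\stabilise$ proof (``$s_1\in\dgamma(M_i(P))$'') relies on exactly this property. Under these assumptions, $s_1\in\dgamma\bigl(\havoc{i(v)}{P}\dmeet\bigdmeet_{v'\in P}i(v')\bigr)$. Since $P\in\powset(\VP)$, this term is one of the joinands on the right-hand side of (1). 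Hence $s_1$ lies in the concretisation of that join, and by monotonicity of $\dgamma$ together with (1), $s_1\in\dgamma(i(v))$.

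The edge case $P=\emptyset$ needs a separate remark: then $s_1=s_2$, and the empty meet $\dtop$ together with $\havoc{i(v)}{\emptyset}\dsqsupseteq i(v)$ makes the argument go through trivially.

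The main obstacle is the step using this interplay between $\dgamma$ and the lattice operations. The key point is to instantiate the powerset join in (1) with precisely the set $P$ of variables changed by the first step. Once that choice is made, the rest is bookkeeping.
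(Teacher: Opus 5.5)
Your proposal is correct and follows the paper's proof essentially step for step. The paper argues by contradiction from a violating variable $\hat v$ (obtaining $s_1(\hat v)=s_2(\hat v)$ from $s_1\notin\dgamma(i(\hat v))$ instead of your case split), but it likewise takes the set $V^\Delta$ of variables changed by the first step, applies the havoc inequality with $s_0=s_2$, intersects with the write-conditions of $V^\Delta$, and concludes via (1); your explicit assumptions that $\dgamma$ is monotone and satisfies $\dgamma(a)\cap\dgamma(b)\subseteq\dgamma(a\dmeet b)$ are exactly what the paper uses implicitly in its final chain of inclusions.
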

\begin{proof}
Suppose $i$ is not transitive. Then there must exist states $s_1,s_2,s_3\in\s$ where:
\[
	(s_1,s_2)\in\igamma(i)\land(s_2,s_3)\in\igamma(i)\land
	(s_1,s_3)\notin\igamma(i).
\]
Recall the definition of $\igamma(i)$:
\[
	\igamma(i)\sdef\left\{
		\left(s,s'\right)\in\s\times\s\mid
		\forall v\in\VP:s'(v)\ne s(v)\implies s\in\dgamma(i(v))
	\right\}.
\]
By this definition, since $(s_1,s_3)\notin\igamma(i)$, there must be some variable, call $\hat v$, where:
\[
	s_3(\hat v)\ne s_1(\hat v)\land s_1\notin\dgamma(i(\hat v)).
\]
Since $s_1\notin\dgamma(i(\hat v))$, we have $s_1(\hat v)=s_2(\hat v)$ by definition of $\igamma(i)$. Thus, since $s_1(\hat v)\ne s_3(\hat v)$, we have $s_2(\hat v)\ne s_3(\hat v)$, and hence $s_2\in\dgamma(i(\hat v))$. Thus we have:
\[
	s_1\notin\dgamma(i(\hat v))\land s_2\in\dgamma(i(\hat v)).\tag{2}
\]
Intuitively, if $i$ is not transitive then it captures a transition from a state not in $\hat v$'s write-condition to a state that is. We will now show that this is impossible given (1). First, let $V^\Delta$ be the set of all variables that differ in $s_1$ and $s_2$. Then we have:
\[
	\forall v\in\VP\setminus V^\Delta:s_1(v)=s_2(v)\tag{3}
\]
\[
	s_1\in\bigcap_{v\in V^\Delta}\dgamma(i(v)).\tag{4}
\]
Recall our constraint on $\havocSymb$, that for any $d\in\D,V\in\powset(\V)$:
\[
	\dgamma(\havoc{d}{V})\supseteq\left\{
		s\in\s\mid
		\exists s_0\in\dgamma(d):
			\forall v\in\VP\setminus V:s(v)=s_0(v)
	\right\}.
\]
By substituting $i(\hat v)$ for $d$ and $V^\Delta$ for $V$, we have:
\[
	\dgamma\left(\havoc{i(\hat v)}{V^\Delta}\right)\supseteq
	\left\{s\in\s\mid
	\exists s_0\in\dgamma(i(\hat v)):
	\forall v\in\VP\setminus V^\Delta:s(v)=s_0(v)\right\}.
\]
Now, from (2) we have $s_2\in\dgamma(i(\hat v))$, and recall from (3) that $\forall v\in\VP\setminus V^\Delta:s_1(v)=s_2(v)$. Thus, $s_1$ is an element of the RHS set above, and thus:
\[
	s_1\in\dgamma\left(\havoc{i(\hat v)}{V^\Delta}\right)\tag{5}.
\]
From (4) and (5) we have:
\begin{align*}
	s_1\in\:&\dgamma\left(\havoc{i(\hat v)}{V^\Delta}\right)\cap\bigcap_{v\in V^\Delta}\dgamma(i(v))\\
	\subseteq\:&\bigcup_{V\in\powset(\VP)}\dgamma\left(\havoc{i(\hat v)}{V}\right)\cap\bigcap_{v\in V}\dgamma(i(v))\\
	\subseteq\:&\dgamma\!\!\left(\ \ \bigdjoin_{V\in\powset(\VP)}\havoc{i(\hat v)}{V}\dmeet\bigdmeet_{v\in V}i(v)\right)\\
	\subseteq\:&\dgamma(i(\hat v))\text{ from (1)}.
\end{align*}
Thus $s_1\in\dgamma(i(\hat v))$, which contradicts (2).
\end{proof}

\section{Optimisations}

\subsection{Optimisations for Computing {\normalfont\textit{stabilise}}}
\label{app:stabilise-opt}

As indicated in Section~\ref{sec:stabilise}, the computation of $\stabilise_N(i,d)$ may be slow for a large $N$ due to the requirement to iterate over a large portion of $\powset(\VP)$. In practice, we can implement a simple optimisation to reduce this domain. For any variable set $V_0\in\powset(\VP)$, if the meet over the write-conditions of the variables in $V_0$ is $\dbot$, we can ignore any superset of $V_0$. Formally, we can ignore any $V\in\powset(\VP)$ satisfying:
\[
	\exists V_0\in\powset(\VP):V_0\subseteq V\land\bigdmeet_{v\in V_0}i(v)=\dbot.
\]
This is due to the fact that, since $V$ is a superset of $V_0$, the meet over the write-conditions of the variables in $V$ is also $\dbot$:
\[
	\bigdmeet_{v\in V}i(v)=\dbot
\]
and thus:
\[
	\havoc{\left(d\dmeet\bigdmeet_{v\in V}i(v)\right)}{V}=\dbot.
\]
This optimisation can be best taken advantage of with an implementation that traverses $\powset(\VP)$ bottom-up. When we encounter some $V_0\in\powset(\VP)$ satisfying $\bigdmeet_{v\in V_0}i(v)=\dbot$, we can ignore all supersets of $V_0$.

\subsection{Optimisations for Computing {\normalfont\textit{close}}}
\label{sec:optimisations}

Our definition of $\close$ requires taking the join over $\powset(\VP)$. Although this is slow, we can implement two significant optimisations to significantly reduce this domain.

\subsubsection{Optimisation 1}

We only need to consider the powerset over variables constrained in $i(v)$. That is, we can remove from our domain any variable sets $V\in\powset(\VP)$ containing some $u\in V$ such that $\havoc{i(v)}{\{u\}}= i(v)$. This is due to the fact that, as noted in Section~\ref{sec:abs}:
\[
	\forall V_1,V_2\in\powset(\V),d\in\D:\havoc{d}{V_1}=d\implies\havoc{d}{V_1\cup V_2}=\havoc{d}{V_2}
\]
and thus, taking $d=i(v)$, $V_1=\{u\}$, and $V_2=V\setminus\{u\}$:
\[
	\havoc{i(v)}{V}=\havoc{i(v)}{(V\setminus\{u\})}
\]
and thus:
\[
	\havoc{i(v)}{V}\dmeet\bigdmeet_{v'\in V}i(v')\dsqsubseteq\havoc{i(v)}{(V\setminus\{u\})}\dmeet\bigdmeet_{v'\in V\setminus\{u\}}i(v').
\]
Since the states captured by domain element $V\setminus\{u\}$ are a superset of those captured by $V$, we can safely remove $V$ from our domain.

\subsubsection{Optimisation 2}

If during our traversal of $\powset(\VP)$ we encounter some $V_0\in\powset(\VP)$ where the result of the havoc operation fully encompasses the meet over the write-conditions of the variables in $V_0$, we can safely ignore any strict superset $V$ of $V_0$. Formally, we can ignore any variable set $V\in\powset(\VP)$ satisfying:
\[
	\exists V_0\subset V:\havoc{i(v)}{V_0}\dsqsupseteq\bigdmeet_{v'\in V_0}i(v').\tag{1}
\]
This is due to the fact that:
\[
	\havoc{i(v)}{V_0}\dmeet\bigdmeet_{v'\in V_0}i(v')=\bigdmeet_{v'\in V_0}i(v')
\]
and therefore, for the variable set $V_0$, it suffices to ignore the havoc operation and simply join the result of the meet operation. Now note that, since $V\supset V_0$, we have:
\[
	\bigdmeet_{v'\in V}i(v')\dsqsubseteq\bigdmeet_{v'\in V_0}i(v')
\]
and therefore:
\[
	\havoc{i(v)}{V}\dmeet\bigdmeet_{v'\in V}i(v')\dsqsubseteq\bigdmeet_{v'\in V_0}i(v').
\]
Since the states captured by domain element $V_0$ are a superset of those captured by $V$, we can safely remove $V$ from our domain. Note that the use of this optimisation requires a bottom-up traversal of $\powset(\VP)$.

\section{Formal Summary of The Conditional-Writes Domain}
\label{app:summary}

This section provides a complete formal summary of the conditional-writes domain. We define:
\begin{description}
	\item The concretisation function $\igamma:\I\to\powset(\s\times\s)$:
	\[
		\igamma(i)\sdef\left\{
		\left(s_1,s_2\right)\in\s\times\s\mid
		\forall v\in\VP:s_2(v)\ne s_1(v)\implies s_1\in\dgamma(i(v))
		\right\}.
	\]
	\item The top ($\itop$) and bottom ($\ibot$) elements:
	\begin{align*}
		\itop&\sdef\lambda v\in\VP:\dtop\\
		\ibot&\sdef\lambda v\in\VP:\dbot.
	\end{align*}
	\item The join operator $\ijoin:\I\times\I\to\I$, defined component-wise:
	\[
		i_1\ijoin i_2\sdef \lambda v\in\VP: i_1(v)\djoin i_2(v).
	\]
	
	\item The meet operator $\imeet:\I\times\I\to\I$, also defined component-wise:
	\[
		i_1\imeet i_2\sdef\lambda v\in\VP: i_1(v)\dmeet i_2(v).
	\]
	\item A function $\stabilise_N:\I\times\D\to\D$, satisfying:
	\[
		\dgamma(\stabilise_N(i,d))\supseteq\dgamma(d)\cup\{s_2\in\s\mid\exists s_1\in\dgamma(d):(s_1,s_2)\in\igamma(i)\}
	\]
	and defined as:
	\begin{align*}
		&\text{let }X=\bigdjoin_{V\in\powset(\V),\:|V|\leq N}
		\havoc{
			\left(
				d\dmeet\bigdmeet_{v\in V}
				i(v)
			\right)
		}{
			V
		}\text{ in}\\
		&\text{let }\V_N=\left\{V\in\powset(\V)\text{\LARGE{$\,\mid\,$}} |V|=N+1\land \left( d\dmeet\bigdmeet_{v\in V}i(v)\right) \ne\dbot\right\}\text{ in}\\
		&\text{let }Y=\havoc{
			\left(\ \ 
				\bigdjoin_{V\in\V_N}
				d\dmeet\bigdmeet_{v\in V}i(v)
			\right)
		}{
			\bigcup_{V\in\V_N}V
		}\text{ in}\\
		&d\djoin X\djoin Y.
	\end{align*}
	\item A function $\transitions:\D\times\assign\to\I$, satisfying:
	\[
		\igamma(\transitions(d,A))\supseteq\{(s_1,s_2)\in\s\times\s \mid
		s_1\in\dgamma(d)\land s_2=\sconc(A,s_1)
		\}
	\]
	and defined as:
	\[
		\transitions(d,\langle x_1,...,x_n:=e_1,...,e_n\rangle)\sdef
		\ibot[v\in\{x_1,...,x_n\}\mapsto d].
	\]
	\item A (weakening) transitive closure function $\close:\I\to\I$, satisfying:
	\begin{align*}
		&\close(i)\isqsupseteq i
		\\&\land\forall s_1,s_2,s_3\in\s:
		(s_1,s_2)\in\igamma(\close(i))
		\land(s_2,s_3)\in\igamma(\close(i))
		\\&\hspace{5.33em}\implies(s_1,s_3)\in\igamma(\close(i))
	\end{align*}
	and defined as:
	\[
		\close(i)\sdef\mu i:\lambda v: \bigdjoin_{V\in\powset(\VP)}\havoc{i(v)}{V}\dmeet\bigdmeet_{v'\in V}i(v').
	\]
\end{description}

\end{document}